\documentclass[10pt,conference,comsocconf]{IEEEtran}

\usepackage[acronym]{glossaries}
\usepackage[T1]{fontenc}

\usepackage{blindtext}
\usepackage{graphicx}
\usepackage{subcaption}
\usepackage{placeins} 
\usepackage{amsmath,amssymb}
\usepackage{braket}

\usepackage{amsthm}
\newtheorem{lemma}{Lemma}

\usepackage{multirow} 
\usepackage{tikz} 
\usetikzlibrary{calc,fit,shapes}
\usetikzlibrary{hobby,backgrounds,trees}
\usepgflibrary{arrows}
\usepackage{varwidth}
\usetikzlibrary{positioning,arrows}

\usepackage{url}     
\usepackage[pdftex,
    hidelinks,
    bookmarks=false,
    pdfpagemode=UseNone
]{hyperref}

\makeatletter
\AtBeginDocument{
  \hypersetup{
    draft=true
  }
}
\makeatother

\usepackage{fancyhdr}

\usepackage{multirow}

\usepackage{tcolorbox}

\usepackage{tabularx}

\begin{document}


\thispagestyle{empty} 

\vfill 

\begin{center}

\begin{tcolorbox}[
    colback=gray!10, colframe=black, fonttitle=\bfseries,
    width=0.9\textwidth, boxrule=1pt, arc=5pt, outer arc=5pt,
    boxsep=10pt, left=10pt, right=10pt, top=10pt, bottom=10pt
]
\textbf{THIS IS AN AUTHOR-CREATED POSTPRINT VERSION.}

\vspace{0.3cm}

\textbf{Disclaimer:}  
This work has been accepted for publication in the \textit{Joint European Conference on Networks and Communications \& 6G Summit (EuCNC/6G Summit)}, 2026.  

\vspace{0.3cm}

\textbf{Copyright:}  
© 2026 IEEE. Personal use of this material is permitted. Permission from IEEE must be obtained for all other uses, in any current or future media, including reprinting/republishing this material for advertising or promotional purposes, creating new collective works, for resale or redistribution to servers or lists, or reuse of any copyrighted component of this work in other works.

\vspace{0.3cm}


\end{tcolorbox}
\end{center}

\vfill 

\clearpage
\setcounter{page}{1}

\newacronym{3GPP}{3GPP}{3rd Generation Partnership Project}

\newacronym{5G}{5G}{5th Generation}
\newacronym{5G-ACIA}{5G-ACIA}{5G Alliance for Connected Industries and Automation}
\newacronym{5G-NR}{5G-NR}{5G New Radio}
\newacronym{6G}{6G}{Sixth Generation}

\newacronym{AMC}{AMC}{Adaptive Modulation and Coding}
\newacronym{AC}{AC}{Admission Control}
\newacronym{AGV}{AGV}{Automated Guided Vehicle}
\newacronym{AR}{AR}{Augmented Reality}

\newacronym{BLER}{BLER}{Block Error Rate}
\newacronym{BWP}{BWP}{Bandwidth Part}
\newacronym{BS}{BS}{Base Station}
\newacronym{BSS}{BSS}{Business Support System}
\newacronym{BSM}{BSM}{Bell State Measurement}

\newacronym{CDF}{CDF}{Cumulative Distribution Function}
\newacronym{CCDF}{CCDF}{Complementary Cumulative Distribution Function}
\newacronym{CDMA}{CDMA}{Code Division Multiple Access}
\newacronym{CTMC}{CTMC}{Continuos-Time Markov Chain}
\newacronym{CSI}{CSI}{Channel State Information}
\newacronym{CP}{CP}{Control Plane}
\newacronym{CQI}{CQI}{Channel Quality Indicator}
\newacronym{CU}{CU}{Centralized Unit}

\newacronym{DL}{DL}{downlink}
\newacronym{DNC}{DNC}{Deterministic Network Calculus}
\newacronym{DRP}{DRP}{Dynamic Resource Provisioning}
\newacronym{DRL}{DRL}{Deep Reinforcement Learnning}
\newacronym{DU}{DU}{Distributed Unit}

\newacronym{eMBB}{eMBB}{enhanced Mobile Broadband}
\newacronym{ETSI}{ETSI}{European Telecommunication Standards Institute}
\newacronym{EBB}{EBB}{Exponentially Bounded Burstiness}
\newacronym{EBF}{EBF}{Exponentially Bounded Fluctuation}
\newacronym{E2E}{E2E}{End-to-End}
\newacronym{EDF}{EDF}{Earliest Deadline First}
\newacronym{EM}{EM}{Expectation-Maximization}

\newacronym{FCFS}{FCFS}{First-come First-served}
\newacronym{FIFO}{FIFO}{First In First Out}
\newacronym{FSO}{FSO}{Free-Space Optical}

\newacronym{GBR}{GBR}{Guaranteed Bit Rate}
\newacronym{GMM}{GMM}{Gaussian Mixture Model}
\newacronym{GSMA}{GSMA}{Global System for Mobile Communications Association}
\newacronym{GST}{GST}{Generic Network Slice Template}
\newacronym{gNB}{gNB}{Next generation NodeB}

\newacronym{HDR}{HDR}{High Data Rate}

\newacronym{ITU}{ITU}{International Telecommunication Union}
\newacronym{IoT}{IoT}{Internet of Things}
\newacronym{ILP}{ILP}{Integer Linear Programming}
\newacronym{ICIC}{ICIC}{Inter-Cell Interference Cancellation}

\newacronym{LA}{LA}{Link Adaptation}
\newacronym{LOS}{LoS}{Line-of-Sight}
\newacronym{LSTM}{LSTM}{Long Short-Term Memory}
\newacronym{LTE}{LTE}{Long Term Evolution}

\newacronym{MAC}{MAC}{Medium Access Control}
\newacronym{MEC}{MEC}{Multi-access Edge Computing}
\newacronym{MCS}{MCS}{Modulation and Coding Scheme}
\newacronym{MDN}{MDN}{Mixture Density Network}
\newacronym{MGF}{MGF}{Moment Generating Function}
\newacronym{MIMO}{MIMO}{Multiple Input Multiple Output}
\newacronym{MISO}{MISO}{Multiple Input Single Output}
\newacronym{MRC}{MRC}{Maximal Ratio Combining}
\newacronym{ML}{ML}{Machine Learning}
\newacronym{MNO}{MNO}{Mobile Network Operator}
\newacronym{mMTC}{mMTC}{Machine Type Communication}
\newacronym{MSE}{MSE}{Mean Squared Error}
\newacronym{mURLLC}{mURLLC}{massive ultra-Reliable Low Latency Communication}

\newacronym{NE}{NE}{Nash Equilibrium}
\newacronym{NEST}{NEST}{Network Slice Type}
\newacronym{NIP}{NIP}{Non-linear Integer Programming}
\newacronym{NFMF}{NFMF}{Network Function Management Function}
\newacronym{NFV}{NFV}{Network Function Virtualization}
\newacronym{NG-RAN}{NG-RAN}{Next Generation - RAN}
\newacronym{NLOS}{NLoS}{Non-Line-of-Sight}
\newacronym{NN}{NN}{Neural Network}
\newacronym{NSO}{NSO}{Network Slice Orchestrator}
\newacronym{NSMF}{NSMF}{Network Slice Management Function}
\newacronym{NSSMF}{NSSMF}{Network Slice Subnet Management Function}
\newacronym{NR}{NR}{New Radio}
\newacronym{NV}{NV}{Nitrogen-Vacancy}

\newacronym{OFDMA}{OFDMA}{Orthogonal Frequency-Division Multiple Access}
\newacronym{O-RAN}{O-RAN}{Open Radio Access Network}
\newacronym{PDF}{PDF}{Probability Density Function}
\newacronym{PMF}{PMF}{Probability Mass Function}
\newacronym{PRB}{PRB}{Physical Resource Block}
\newacronym{P-NEST}{P-NEST}{private NEST}

\newacronym{QApp}{QApp}{Quantum Application}
\newacronym{QoS}{QoS}{Quality of Service}
\newacronym{QN}{QN}{Quantum Network}
\newacronym{QR}{QR}{Quantum Repeater}

\newacronym{RAN}{RAN}{Radio Access Network}
\newacronym{RB}{RB}{Resource Block}
\newacronym{RBG}{RBG}{Resource Block Group}
\newacronym{RIC}{RIC}{RAN Intelligent Controller}
\newacronym{RIS}{RIS}{Reconfigurable Intelligent Surfaces}
\newacronym{RRM}{RRM}{Radio Resource Management}
\newacronym{RSMA}{RSMA}{Rate Splitting Multiple Access}
\newacronym{RSRP}{RSRP}{Received Signal Received Power}
\newacronym{RSRQ}{RSRQ}{Received Signal Received Quality}
\newacronym{RSSI}{RSSI}{Received Signal Strength Indication}
\newacronym{RT}{RT}{Real Time}
\newacronym{RU}{RU}{Radio Unit}

\newacronym{SDO}{SDO}{Standards Developing Organization}
\newacronym{SINR}{SINR}{Signal-to-Interference-plus-Noise Ratio}
\newacronym{SLA}{SLA}{Service Level Agreement}
\newacronym{SNC}{SNC}{Stochastic Network Calculus}
\newacronym{SNR}{SNR}{Signal-to-Noise Ratio}
\newacronym{S-NEST}{S-NEST}{standardized NEST}

\newacronym{TTI}{TTI}{Transmission Time Interval}

\newacronym{UE}{UE}{User Equipment}
\newacronym{UL}{UL}{Uplink}
\newacronym{UP}{UP}{User Plane}
\newacronym{uRLLC}{uRLLC}{ultra-Reliable Low Latency Communication}

\newacronym{V2X}{V2X}{Vehicle-to-Everything}
\newacronym{VBQC}{VBQC}{Verifiable Blind Quantum Computing}
\newacronym{VBR}{VBR}{Variable Bit Rate}
\newacronym{VR}{VR}{Virtual Reality}
\newacronym{vRAN}{vRAN}{virtualized RAN}
\newacronym{vBS}{vBS}{virtualized Base Station}
\newacronym{VS}{VS}{Validation Scenario}

\newacronym{WiMAX}{WiMAX}{Worldwide Interoperability for Microwave Access}
\newacronym{WCDMA}{WCDMA}{Wideband \gls{CDMA}}

%
\title{How Many Qubits Can Be Teleported? Scalability
of Fidelity-Constrained Quantum Applications}

\author{\IEEEauthorblockN{Oscar Adamuz-Hinojosa, Jonathan Prados-Garzon, Sara Vaquero-Gil, Juan M. Lopez-Soler}

\IEEEauthorblockA{Department of Signal Theory, Telematics and Communications, University of Granada.}
\IEEEauthorblockA{Email: \{oadamuz,jpg,juanma\}@ugr.es; saravaquerogil@correo.ugr.es}
}

\markboth{Journal of \LaTeX\ Class Files,~Vol.~6, No.~1, January~2007}%
{Shell \MakeLowercase{\textit{et al.}}: Bare Demo of IEEEtran.cls for Journals}

\maketitle

\begin{abstract}
Quantum networks (QNs) enable qubit transfer between distant nodes through quantum teleportation, which reconstructs a quantum state at a remote node by consuming a shared Bell pair. In multi-qubit quantum applications (QApps), the teleported qubits may need to remain stored in quantum memories until execution can start, while decoherence progressively reduces their fidelity with respect to the ideal target state. Such QApps can operate only if all teleported qubits simultaneously satisfy a minimum fidelity threshold. In this paper, we study how many qubits can be teleported under this fidelity-constrained operation in a two-node QN. To this end, we define a QApp-level reliability metric as the probability that all end-to-end Bell pairs satisfy the target fidelity when the multi-qubit teleportation stage is completed. We then develop a Monte Carlo simulator that captures stochastic Bell-pair generation, Quantum Repeater (QR)-assisted entanglement distribution, and fidelity degradation. The analysis considers fiber-based and terrestrial free-space optical (FSO) quantum links, as well as representative NV-center- and trapped-ion-based quantum memories. Results show that memory coherence is the main scalability bottleneck under stringent fidelity targets, while parallel entanglement generation is essential for multi-qubit teleportation.
\end{abstract}

\begin{IEEEkeywords}
fidelity, quantum applications, parallel entanglement generation, quantum repeaters, quantum teleportation.
\end{IEEEkeywords}

\section{Introduction}
Quantum information is encoded in fragile quantum states that cannot be copied or regenerated without disturbance. As a result, long-distance qubit transmission is strongly limited by loss and noise, and classical amplification cannot be used to preserve quantum information~\cite{Wehner_QuantumInternet18,pirandola2017fundamental}. To overcome this limitation, \glspl{QN} rely on quantum teleportation to transfer a quantum state from a source node (Alice) to a destination node (Bob) without physically sending the qubit itself~\cite{Bouwmeester1997}. Teleportation consumes pre-shared entanglement, typically in the form of Bell pairs, together with classical communication to reconstruct the state at Bob.

Many \glspl{QApp} include an execution-gated phase in which multiple qubits must be teleported from Alice to Bob before quantum processing can begin. This requires the prior establishment of multiple end-to-end Bell pairs. Since Bell-pair generation is probabilistic, these pairs become available at different times, so earlier ones must remain stored in quantum memories while later ones are established. During this storage time, decoherence degrades the corresponding quantum states. Execution is therefore feasible only if, when the teleportation phase is completed, all required qubits are simultaneously available at Bob with fidelity above a minimum threshold, where fidelity measures the overlap between the stored state and the ideal target state~\cite{davies24tools}. Representative examples of such execution-gated multi-qubit \glspl{QApp} include distributed quantum computing~\cite{Main2025DQC,Barral2025DQC}, entanglement-assisted sensing and synchronization~\cite{IloOkeke2020QCS,Barhoumi2024QANT}, and \gls{QN} calibration and benchmarking~\cite{Pappa2012PRL,McCutcheon2016NatComm}.

\textbf{Motivation.}
Direct quantum teleportation over long distances is severely constrained by photon losses, which reduce both the achievable range and the probability of establishing the Bell pairs required for teleportation~\cite{pirandola2017fundamental}. \glspl{QR} alleviate this limitation by dividing a long quantum link into shorter segments and using entanglement swapping to distribute Bell pairs over longer distances~\cite{TelecomWavelengthQR}. This has motivated extensive work on \gls{QR}-assisted \glspl{QN}, including queueing-based models of Bell-pair generation and storage~\cite{Dai20IEEEJSAC}, continuous-time Markov models for entanglement switching~\cite{Vardoyan20ScienceDirectPerfEval,Vardoyan21IEEETransQEng,Vardoyan23ACMTMPE,Nain20ACMMACS}, and architectural performance studies~\cite{Panigrahy23INFOCOM,Tillman24IEEEQCE}. However, these works focus on link- and protocol-level behavior and do not address \gls{QApp}-level success conditions. In particular, they do not quantify \emph{reliability}, understood here as the probability that all qubits required by a \gls{QApp} are simultaneously available with fidelity above a target threshold after storage in quantum memories. The same limitation affects widely used \gls{QN} simulators~\cite{Bel2025SimulatorsReview}, such as QKDNetSim+, NetSquid, and QuISP, whose abstractions do not model continuous-time fidelity degradation during passive storage and therefore cannot assess the feasibility of multi-qubit teleportation under fidelity constraints.

\textbf{Contributions.}
This paper addresses the following question: given a \gls{QR}-assisted \gls{QN} with stochastic Bell-pair generation and quantum-memory decoherence, \emph{how many qubits can be teleported while all of them simultaneously satisfy a target fidelity constraint?} To answer this question, we study the scalability of fidelity-constrained multi-qubit teleportation in a two-node \gls{QN} assisted by a single \gls{QR}. The main contributions are:

\textbf{C1) \gls{QApp}-Level Reliability Metric:}
We formulate a \gls{QApp}-level \emph{reliability} metric that quantifies the probability that all Bell pairs required for a multi-qubit teleportation phase are simultaneously available with fidelity above a target threshold, jointly capturing stochastic entanglement generation, parallelism, and memory decoherence.

\textbf{C2) Reliability-Oriented Simulation Framework:}
We develop a Monte Carlo--based simulator that captures stochastic Bell-pair generation, parallel entanglement attempts, \gls{QR}-assisted entanglement distribution, and continuous-time fidelity degradation in quantum memories, enabling the evaluation of \gls{QApp}-level reliability.

\textbf{C3) Feasibility Analysis Across Technologies:}
We characterize the feasible operating regions in which a given number of qubits can be teleported while satisfying target fidelity and reliability constraints, and we assess how these regions depend on the quantum-memory technology, i.e., \gls{NV}-based centers or trapped ions, and on the transmission medium, i.e., optical fiber or terrestrial \gls{FSO}.

\textbf{Paper Outline.}
Section~\ref{sec:background} reviews the fundamentals of \gls{QN}. Section~\ref{sec:system_model} presents the system model. Section~\ref{sec:model} introduces the \gls{QApp}-level \emph{reliability} formulation. Section~\ref{sec:experimental_setup} describes the simulation setup. Section~\ref{sec:results} discusses the results, and Section~\ref{sec:conclusions} concludes the paper.

\section{Fundamentals}
\label{sec:background}

\subsection{Qubits and Quantum Entanglement}

A qubit is the fundamental unit of quantum information and can be expressed as a coherent superposition $\ket{\psi}=a\ket{0}+b\ket{1}$, with $a,b\in\mathbb{C}$ and $|a|^2+|b|^2=1$. In \glspl{QN}, qubits cannot be copied and must therefore be transferred between remote nodes using quantum teleportation. This protocol relies on bipartite quantum entanglement, typically in the form of Bell states, $\ket{\Phi^\pm}=\tfrac{1}{\sqrt{2}}(\ket{00}\pm\ket{11})$ and $\ket{\Psi^\pm}=\tfrac{1}{\sqrt{2}}(\ket{01}\pm\ket{10})$, which form an orthonormal basis of two-qubit states.

\subsection{Quantum Teleportation}
\label{sec:QuantumTeleportation}

Quantum teleportation enables the transfer of an unknown single-qubit state $\ket{\psi}$ from a source node (Alice) to a destination node (Bob). We consider teleportation assisted by an intermediate \gls{QR}, as illustrated in Fig.~\ref{fig:system-model}. First, the \gls{QR} probabilistically generates one Bell pair over the \gls{QR}--Alice link and another over the \gls{QR}--Bob link, storing one qubit of each pair locally while the other qubits are stored at Alice and Bob, respectively. Once both links succeed, the \gls{QR} performs a \gls{BSM} on its two stored qubits, thereby applying entanglement swapping and creating an end-to-end Bell pair shared between Alice and Bob. Let $A_1$ and $B$ denote the qubits of this resulting Bell pair stored at Alice and Bob, respectively, and let $A_0$ denote Alice's input qubit prepared in state $\ket{\psi}$. Alice then performs a joint \gls{BSM} on $A_0$ and $A_1$, which maps $\ket{\psi}$ onto Bob's qubit $B$ up to a Pauli byproduct operator. Finally, the outcomes of the \gls{BSM} operations at the \gls{QR} and at Alice are sent to Bob through a classical channel, allowing Bob to apply the appropriate Pauli correction and recover $\ket{\psi}$ deterministically.

Because entanglement generation, swapping, and classical signaling are not instantaneous, qubits may remain stored in the quantum memories of Alice, the \gls{QR}, and Bob while the protocol is completed. During this storage time, decoherence progressively degrades the corresponding quantum states over a characteristic time scale determined by the memory coherence time $\tau$.

\begin{figure}[t!]
  \centering
  \includegraphics[width=\columnwidth]{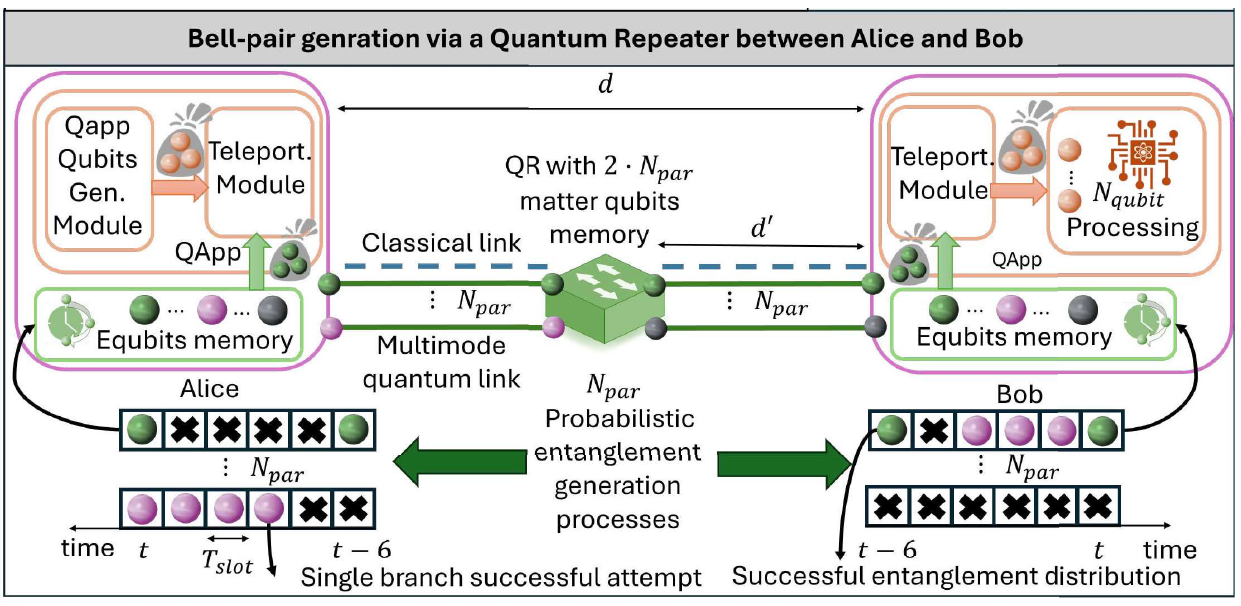}
  \caption{A \gls{QN} with two endpoints, connected via a \gls{QR}, illustrating parallel Bell pair distribution of qubits.}
  \label{fig:system-model}
\end{figure}

\section{System Model}
\label{sec:system_model}

We consider the \gls{QN} in Fig.~\ref{fig:system-model}, which consists of two remote nodes, Alice and Bob, connected through a single intermediate \gls{QR}. Alice and Bob host a distributed \gls{QApp} that requires the teleportation of $N_{\text{qubit}}>1$ independent qubits from Alice to Bob. Each qubit $n \in \{1,\ldots,N_{\text{qubit}}\}$ is teleported using a distinct end-to-end Bell pair obtained by generating one elementary Bell pair on the \gls{QR}--Alice link and one on the \gls{QR}--Bob link, followed by entanglement swapping at the \gls{QR}. The teleportation phase is successful only if every teleported qubit satisfies $F(\rho^{(n)},\psi^{(n)}) \geq F_{\mathrm{th}}$.

The \gls{QR} supports parallel entanglement-generation attempts with Alice and Bob. It is equipped with matter-based quantum memory of capacity $2N_{\text{par}}$ and coherence time $\tau_{\mathrm{QR}}$, where $N_{\text{par}}$ is the maximum number of parallel attempts. Successfully generated qubits are buffered at the \gls{QR} until both elementary links are available for entanglement swapping. Entanglement generation is probabilistic and proceeds in discrete time slots of duration
\begin{equation}
    T_{\text{slot}} = T_{\text{comm}} + T_{\text{att}} + T_{\text{BSM}} + T_{\text{PC}}.
\end{equation}
Here, $T_{\text{comm}} = 2d'/c$ is the round-trip classical signaling delay between the \gls{QR} and a remote node, where $d'$ is the \gls{QR}--remote-node distance. In the considered symmetric topology, the \gls{QR} is placed midway between Alice and Bob, so that $d'=d/2$, where $d$ is the Alice--Bob distance. The remaining terms denote the duration of an entanglement-generation attempt ($T_{\text{att}}$), the \gls{BSM} at the \gls{QR} and Alice ($T_{\text{BSM}}$), and the conditional Pauli corrections after classical signaling ($T_{\text{PC}}$). Entanglement generation and swapping follow a standard heralded protocol.

After entanglement swapping, the resulting end-to-end Bell pairs are stored in the local quantum memories of Alice and Bob, with capacities $N_{\text{qubit}}$ and coherence times $\tau_A$ and $\tau_B$, respectively. Each Bell pair is reserved for one qubit and decoheres independently while stored. Qubits are teleported individually as soon as their corresponding Bell pairs become available, and the teleported qubits are then stored at Bob, where decoherence starts upon teleportation. Let $\rho_B^{(q)}$ denote the state of qubit $q$ stored at Bob after teleportation. The \gls{QApp} execution is successful if, when the last qubit has been teleported, $F(\rho_B^{(q)},\psi^{(q)}) \geq F_{\mathrm{th}}$ holds for all $q \in \{1,\ldots,N_{\text{qubit}}\}$. Under the considered teleportation model, this condition can be equivalently assessed through the fidelity of the end-to-end Bell pairs used for teleportation. Otherwise, the execution fails and the whole procedure must be repeated with a new set of $N_{\text{qubit}}$ Bell pairs.

The model is restricted to a two-node \gls{QN} assisted by a single intermediate \gls{QR} and to raw entanglement distribution, i.e., without entanglement purification or distillation. This scope isolates the joint effect of stochastic entanglement-generation delay, parallelism, swapping quality, and quantum-memory decoherence on \gls{QApp}-level reliability. The general formulation keeps the coherence times at Alice, Bob, and the \gls{QR}, i.e., $\tau_A$, $\tau_B$, and $\tau_{\mathrm{QR}}$, explicit, while the experimental evaluation later specializes them to representative configurations.

\subsection{Quantum Channel Model}
\label{subsec:quantum_channel}

We consider two transmission media for entanglement distribution between the \gls{QR} and the remote nodes.

\textbf{Optical Fiber Links:}
Entanglement distribution is realized over optical fiber links using traveling photonic qubits. Each fiber supports one entanglement-generation attempt, so parallelism is achieved by deploying multiple fibers in parallel between the \gls{QR} and Alice or Bob. Attempts over different fibers are assumed independent and non-interfering. Hence, $N_{\text{par}}$ scales with the number of available fibers and is mainly limited by the deployed infrastructure.

\textbf{Terrestrial \gls{FSO} Links:}
Entanglement distribution is realized over line-of-sight \gls{FSO} links. Parallel entanglement-generation attempts are enabled by transmitting multiple spatially separated optical beams between the \gls{QR} and Alice or Bob. Since each beam occupies a distinct spatial mode and is detected independently, different attempts are distinguishable and non-interfering~\cite{Koudia2025SpatialModeDiversity}. In this case, the achievable parallelism $N_{\text{par}}$ is constrained by the spatial degrees of freedom of the \gls{FSO} channel, which depend on the channel space--bandwidth product and practical factors such as beam divergence, atmospheric turbulence, pointing accuracy, and receiver aperture size. For realistic terrestrial links, these constraints typically limit the number of independently addressable spatial modes to $1 \leq N_{\text{par}} \lesssim 10$~\cite{ZhaoFSOCapacity2015}.

\subsection{Entanglement Success Probability}
\label{subsec:entanglement_success_probability}

We model the effective transmission efficiency of the quantum links between the \gls{QR} and Alice or Bob as
\begin{equation}
\eta_\ell =
P_e \,
P_{\mathrm{det}} \,
P_{ce}^{(\ell)} \,
C_\ell \,
e^{-d_\ell^{\prime} / L_\ell},
\label{eq:eta_generic}
\end{equation}
where $\ell \in \{\text{fiber},\,\text{FSO}\}$ denotes the transmission medium and $d_\ell^{\prime}$ the corresponding physical link length. The parameter $P_e$ is the probability that a matter qubit emits a photon into the intended photonic mode, and $P_{\mathrm{det}}$ is the probability that a photon arriving at the receiver is detected. Both are intrinsic node-level efficiencies and are assumed independent of the transmission medium. The factor $P_{ce}^{(\ell)}$ accounts for coupling losses between the local photonic mode and the propagation mode of medium $\ell$, including interface losses and, for \gls{FSO}, residual effects such as pointing errors and turbulence-induced fading. Finally, $L_\ell$ is the attenuation length of medium $\ell$.

For optical fiber, i.e., $\ell=\text{fiber}$, propagation occurs in a guided medium and geometric coupling losses are neglected, so $C_{\text{fiber}} = 1$. For \gls{FSO}, i.e., $\ell=\text{FSO}$, propagation occurs over an unguided line-of-sight channel, and geometric coupling losses are modeled using a truncated Gaussian beam model~\cite{PirandolaFSO2021}. The corresponding coupling efficiency is
\begin{equation}
    C_{\text{FSO}}
=
1 - \exp\!\left[-\left(2 a_{\text{FSO}}^2\right)/\left(w^2\left[d_{\text{FSO}}^{\prime}\right]\right)\right],
\end{equation}
where $a_{\text{FSO}} = A_{\text{FSO}}/2$ is the receiver aperture radius and $w(d_{\text{FSO}}^{\prime})$ is the beam radius at distance $d_{\text{FSO}}^{\prime}$, given by
\begin{equation}
    w(d_{\text{FSO}}^{\prime}) =
w_0\sqrt{1 +\left(\lambda d_{\text{FSO}}^{\prime}/\left(\pi w_0^2\right)\right)^2},
\end{equation}
with $w_0$ the transmitter beam waist and $\lambda$ the optical wavelength. Given $\eta_\ell$, the probability of successfully generating a heralded entangled state over link $\ell$ in a single attempt is modeled as~\cite{Chehimi2025}
\begin{equation}
p_{\mathrm{suc}}^{(\ell)} =
2 \, \eta_\ell \cos^2\theta
\left( 1 - \eta_\ell^2 \cos^2\theta \right),
\label{eq:psuc_generic}
\end{equation}
where $\theta$ parameterizes the matter--photon entangled-state preparation.

\subsection{Bell-Pair Fidelity Evolution under Memory Decoherence}
\label{subsec:fidelity_evolution}

As discussed in Section~\ref{sec:QuantumTeleportation}, teleportation includes several storage phases in which the qubits forming a Bell pair may reside in the quantum memories of Alice, Bob, and the \gls{QR}, with coherence times $\tau_A$, $\tau_B$, and $\tau_{\mathrm{QR}}$, respectively. For any storage phase, we assume independent depolarizing noise on the two stored qubits. Under this assumption, the Bell-pair fidelity at time $t$ is
\begin{equation}
F(t,t_{1},t_{2}) =
\frac{3}{4}\left(\frac{4F_0-1}{3}\right)
e^{-\frac{t-t_{1}}{\tau_1}}
e^{-\frac{t-t_{2}}{\tau_2}}
+\frac{1}{4},
\label{eq:fidelity_time}
\end{equation}
where $t_1$ and $t_2$ are the time instants at which the two qubits start being stored, and $\tau_1$ and $\tau_2$ are the corresponding coherence times, instantiated as $\tau_{\mathrm{QR}}$, $\tau_A$, or $\tau_B$ depending on the hosting nodes. The parameter $F_0$ denotes the Bell-pair fidelity at the beginning of the considered storage phase. The exponential terms model independent coherence decay of the two qubits, while the affine form ensures convergence to $1/4$ for long storage times, corresponding to the overlap with a maximally mixed two-qubit state under depolarizing noise.

For the elementary Bell pairs generated on branches $A$ and $B$, the initial fidelities of their respective storage phases are denoted by $F_{0,A}$ and $F_{0,B}$. Accordingly, Eq.~\eqref{eq:fidelity_time} is applied to each elementary pair with $F_0=F_{0,A}$ or $F_0=F_{0,B}$, depending on the branch. Once both elementary links are available, entanglement swapping at the \gls{QR} creates an end-to-end Alice--Bob Bell pair. To describe this step compactly, we define the Werner parameter as $\xi=(4F-1)/3$~\cite{bruss}. Let $\xi_A$ and $\xi_B$ denote the Werner parameters of the two elementary Bell pairs at the instant of swapping, after any storage-induced decoherence accumulated before the \gls{BSM}. The Werner parameter of the post-swapping Bell pair is modeled as $\xi_0^{\mathrm{swap}}=\eta_{\mathrm{swap}}\,\xi_A \xi_B$, where $\eta_{\mathrm{swap}}\in(0,1]$ is a swapping-quality factor. Equivalently, the initial fidelity of the subsequent Alice--Bob storage phase is $F_0^{\mathrm{swap}}=\frac{1}{4}+\frac{3}{4}\eta_{\mathrm{swap}}\,\xi_A \xi_B$. Therefore, Eq.~\eqref{eq:fidelity_time} is applied with the appropriate phase-dependent initialization: $F_0=F_{0,A}$ or $F_0=F_{0,B}$ for elementary-link storage phases, and $F_0=F_0^{\mathrm{swap}}$ for the final storage phase of the end-to-end Bell pair after entanglement swapping.

\section{Reliability of Multi-Qubit Teleportation under Fidelity Constraints}
\label{sec:model}

\subsection{Reliability Definition}

Let $R$ denote the probability that the teleportation of $N_{\text{qubit}}$ qubits is completed while all teleported qubits satisfy the target fidelity constraint. Let $t_{\text{last}}$ be the time slot at which the last of the $N_{\text{qubit}}$ required Bell pairs has been successfully generated, distributed, and stored at Bob. The fidelity requirement is
$\min_{n \in \{1,\dots,N_{\text{qubit}}\}} F_n(t_{\text{last}}) \geq F_{\mathrm{th}}$,
where $F_n(t)$ denotes the fidelity of the Bell pair associated with qubit $n$ at time $t$. Since $t_{\text{last}}$ is random, the reliability is obtained by averaging over all possible completion times:
\begin{equation}
R =
\sum_{t_{\text{last}}=1}^{T_{\max}}
\mathbb{P}_{\mathrm{se}}[t_{\text{last}}]\,
R_{t_{\text{last}}}^{\mathrm{cond}},
\label{eq:expected_reliability}
\end{equation}
where $\mathbb{P}_{\mathrm{se}}[t_{\text{last}}]$ is the probability that all $N_{\text{qubit}}$ Bell pairs have been generated and stored by slot $t_{\text{last}}$, and $R_{t_{\text{last}}}^{\mathrm{cond}}$ is the conditional probability that the fidelity constraint is satisfied at that instant. In principle, the sum extends to infinity; in practice, $T_{\max}$ is chosen such that
$\sum_{t_{\text{last}}=1}^{T_{\max}} \mathbb{P}_{\mathrm{se}}[t_{\text{last}}] \ge 1-\epsilon$,
with $\epsilon$ a small tolerance (e.g., $10^{-6}$).

\subsection{Conditional Reliability at Completion Time}

The conditional reliability at completion time $t_{\text{last}}$ is $R_{t_{\text{last}}}^{\mathrm{cond}}
= \mathbb{P}\!\left( \mathbf{F}(t_{\text{last}}) \in \mathcal{F}_{\mathrm{th}} \right)$, where
$\mathbf{F}(t_{\text{last}})=\big(F_1(t_{\text{last}}),\ldots,F_{N_{\text{qubit}}}(t_{\text{last}})\big)$
collects the fidelities of all Bell pairs at $t_{\text{last}}$, and
$\mathcal{F}_{\mathrm{th}}=\{\mathbf{f}: f_n \geq F_{\mathrm{th}},\ \forall n\}$
is the feasible fidelity region. For tractability, we assume independent decoherence across qubits. This is reasonable when coherence is dominated by local noise sources, as commonly occurs in trapped-ion platforms~\cite{Bruzewicz2019} and solid-state quantum-network nodes such as NV centers~\cite{Bradley2019}. Under this assumption,
\begin{equation}
R_{t_{\text{last}}}^{\mathrm{cond}}
=
\prod_{n=1}^{N_{\text{qubit}}}
\mathbb{P}_{\mathrm{fid}}[n,t_{\text{last}}],
\label{eq:conditional_reliability}
\end{equation}
where $\mathbb{P}_{\mathrm{fid}}[n,t_{\text{last}}]$ is the probability that qubit $n$ satisfies the fidelity constraint at $t_{\text{last}}$.

\subsection{Fidelity Constraints at Completion Time}
\label{subsec:fidelity_prior_e2e}

With an intermediate \gls{QR}, entanglement generation for qubit $n$ involves two independent processes: one on the \gls{QR}--Alice link (branch $A$) and one on the \gls{QR}--Bob link (branch $B$). These processes succeed at random time slots $t_A$ and $t_B$ with probabilities $\mathbb{P}_{\mathrm{qubit}}^{A}[n,t_A]$ and $\mathbb{P}_{\mathrm{qubit}}^{B}[n,t_B]$, respectively. Since the two branches generally complete at different times, the earlier elementary Bell pair may decohere before entanglement swapping, and the resulting end-to-end Bell pair may then undergo an additional storage phase before $t_{\text{last}}$. The probability that qubit $n$ satisfies the fidelity constraint at completion time $t_{\text{last}}$ is obtained by averaging over all pairs $(t_A,t_B)$:
\begin{equation}
\mathbb{P}_{\mathrm{fid}}[n, t_{\text{last}}]
=
\frac{
\sum_{t_A,t_B=1}^{t_{\text{last}}}
\mathbb{P}_{\mathrm{qubit}}^{A}[n, t_A]\,
\mathbb{P}_{\mathrm{qubit}}^{B}[n, t_B]\,
\mathbf{1}_{(t_A,t_B)}
}{
\sum_{t_A,t_B=1}^{t_{\text{last}}}
\mathbb{P}_{\mathrm{qubit}}^{A}[n, t_A]\,
\mathbb{P}_{\mathrm{qubit}}^{B}[n, t_B]
}
\label{eq:Probability_fidelity_qr}
\end{equation}

where $\mathbf{1}_{(t_A,t_B)}$ equals $1$ if the fidelity constraint is satisfied for the pair $(t_A,t_B)$, and $0$ otherwise. The feasible pairs are characterized by the following lemma.

\begin{lemma}[Storage-Time Feasibility Condition]
\label{lem:storage_feasibility_qr}
Consider a qubit $n$ for which entanglement generation on branches $A$ and $B$ succeeds at time slots $(t_A,t_B)$. Let $\xi_{0,A}=(4F_{0,A}-1)/3$ and $\xi_{0,B}=(4F_{0,B}-1)/3$, and define $t'=|t_A-t_B|$ and $t''=t_{\text{last}}-\max\{t_A,t_B\}$. Let $(\tau',\tau'')=(\tau_A,\tau_B)$ if $t_A\le t_B$, and $(\tau',\tau'')=(\tau_B,\tau_A)$ otherwise, so that $\tau'$ corresponds to the endpoint that succeeds first. Then, the resulting end-to-end Bell pair satisfies $F(t_{\text{last}},t_A,t_B)\geq F_{\mathrm{th}}$ at completion time $t_{\text{last}}$ if and only if
\begin{equation}
\label{eq:storage_feasibility_qr_eta}
\begin{aligned}
t'\,\tau''(\tau'+\tau_{QR}) + t''\,\tau_{QR}(\tau'+\tau'')
\leq\;
\\
-\frac{\tau_{QR}\tau_A\tau_B}{T_{\text{slot}}}
\ln\!\left(\frac{4F_{\mathrm{th}}-1}{3\eta_{\mathrm{swap}}\xi_{0,A}\xi_{0,B}}\right),
\end{aligned}
\end{equation}
provided that $F_{\mathrm{th}}\leq \left(1+3\eta_{\mathrm{swap}}\xi_{0,A}\xi_{0,B}\right)/4$.
\end{lemma}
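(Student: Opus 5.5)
Without loss of generality I will take $t_A\le t_B$, so that $\tau'=\tau_A$ and $\tau''=\tau_B$. The case $t_A>t_B$ follows by exchanging the roles of the two branches; the symmetry of Eq.~\eqref{eq:fidelity_time} and of the swapping rule $\xi_0^{\mathrm{swap}}=\eta_{\mathrm{swap}}\xi_A\xi_B$ makes this exchange harmless.

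The plan is to work entirely with Werner parameters. Subtracting $1/4$ and multiplying by $4/3$ turns Eq.~\eqref{eq:fidelity_time} into the statement that, during a storage phase, $\xi$ decays multiplicatively as $\xi(t)=\xi_0\,e^{-(t-t_1)/\tau_1}e^{-(t-t_2)/\tau_2}$. I will trace each storage phase, converting slot counts to physical time through $T_{\text{slot}}$.

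\textbf{Phase 1 (before swapping).} The elementary pair on branch $A$ is created at $t_A$ and waits $t'T_{\text{slot}}$ until branch $B$ succeeds. Its two qubits sit at the QR (coherence $\tau_{QR}$) and at Alice (coherence $\tau_A$). Hence at the swapping instant
\[
\xi_A=\xi_{0,A}\exp\!\big(-t'T_{\text{slot}}(1/\tau_{QR}+1/\tau_A)\big),
\]
while the branch-$B$ pair is fresh, so $\xi_B=\xi_{0,B}$.

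\textbf{Swapping.} The post-swap Werner parameter is $\eta_{\mathrm{swap}}\xi_A\xi_B$.

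\textbf{Phase 2 (after swapping).} The end-to-end pair is stored at Alice and Bob for $t''T_{\text{slot}}$. This multiplies $\xi$ by $\exp\!\big(-t''T_{\text{slot}}(1/\tau_A+1/\tau_B)\big)$.

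Chaining the three steps gives
\[
\xi(t_{\text{last}})=\eta_{\mathrm{swap}}\xi_{0,A}\xi_{0,B}\exp\!\Big(-T_{\text{slot}}\big[t'(\tfrac{1}{\tau_{QR}}+\tfrac{1}{\tau'})+t''(\tfrac{1}{\tau'}+\tfrac{1}{\tau''})\big]\Big).
\]
Since $F\mapsto(4F-1)/3$ is increasing, $F\ge F_{\mathrm{th}}$ holds if and only if $\xi(t_{\text{last}})\ge(4F_{\mathrm{th}}-1)/3$. Dividing by the positive constant $\eta_{\mathrm{swap}}\xi_{0,A}\xi_{0,B}$ and taking logarithms, which is monotone, turns this into a linear inequality in $t',t''$. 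Multiplying through by $\tau_{QR}\tau_A\tau_B/T_{\text{slot}}>0$ then gives
\[
t'\big(\tau''(\tau'+\tau_{QR})\big)+t''\,\tau_{QR}(\tau'+\tau'')\le-\frac{\tau_{QR}\tau_A\tau_B}{T_{\text{slot}}}\ln(\cdot).
\]
Here I use $\tau_A\tau_B=\tau'\tau''$, and the coefficient of $t'$ comes out as $\tau_{QR}\tau'\tau''(1/\tau_{QR}+1/\tau')=\tau''(\tau'+\tau_{QR})$, matching Eq.~\eqref{eq:storage_feasibility_qr_eta}.

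Both directions are equivalences throughout, which yields the ``if and only if.'' The proviso on $F_{\mathrm{th}}$ makes the argument of the logarithm at most $1$, so the right-hand side is nonnegative and the feasible region is nonempty; it is exactly the condition at $t'=t''=0$. I also need to assume $\xi_{0,A},\xi_{0,B}>0$ and $F_{\mathrm{th}}>1/4$ so that the logarithm is defined; I will state this implicitly, as the paper does.

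\textbf{Main obstacle.} The subtle point is the modeling choice about which qubits decohere in each phase, in particular that the later branch-$B$ pair accrues no decoherence before swapping. A further subtlety is whether Alice's qubit from branch $A$ in phase 2 should be counted with $\tau'$ as a continuation. I will check that the stated coefficients force exactly this bookkeeping: coherence $\tau_{QR}$ and $\tau'$ in phase 1 for a duration of $t'$ slots, then $\tau_A$ and $\tau_B$ in phase 2. This is consistent with Eq.~\eqref{eq:fidelity_time} applied phase by phase, with $F_0$ re-initialized at swapping.
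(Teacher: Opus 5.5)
Your proposal is correct and follows the paper's proof essentially step for step. The paper likewise decays the earlier elementary pair's Werner parameter with $\tau_{QR}$ and $\tau'$ over $t'$ slots, takes the later branch at its fresh value, applies $\eta_{\mathrm{swap}}\xi'\xi_0''$ at swapping, decays with $\tau_A,\tau_B$ over $t''$ slots, and then takes logarithms and rearranges; your explicit coefficient check and the positivity assumptions $\xi_{0,A},\xi_{0,B}>0$, $F_{\mathrm{th}}>1/4$ only make explicit what the paper leaves implicit.
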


\begin{IEEEproof}
Under the depolarizing-noise model, let the elementary Bell pair generated first have coherence time $\tau'$ and initial Werner parameter $\xi_0'$, where $(\xi_0',\tau')=(\xi_{0,A},\tau_A)$ if $t_A\le t_B$, and $(\xi_{0,B},\tau_B)$ otherwise. Let $\xi_0''$ denote the initial Werner parameter of the later branch, i.e., $\xi_0''=\xi_{0,B}$ if $t_A\le t_B$, and $\xi_0''=\xi_{0,A}$ otherwise. After the first storage stage of duration $t'$, the earlier elementary Bell pair has parameter $\xi' = \xi_0' \exp\!\left[-T_{\text{slot}}t'\left(\frac{1}{\tau_{QR}}+\frac{1}{\tau'}\right)\right]$. At the instant of swapping, the later branch contributes $\xi_0''$, so the post-swapping end-to-end Bell pair starts the second storage stage with parameter $\xi_0^{\mathrm{swap}}=\eta_{\mathrm{swap}}\,\xi'\xi_0''=\eta_{\mathrm{swap}}\,\xi_{0,A}\xi_{0,B}\exp\!\left[-T_{\text{slot}}t'\left(\frac{1}{\tau_{QR}}+\frac{1}{\tau'}\right)\right]$. The second storage stage lasts $t''$ slots and involves Alice and Bob, yielding $\xi''=\eta_{\mathrm{swap}}\,\xi_{0,A}\xi_{0,B}\exp\!\left\{-T_{\text{slot}}\left[t'\left(\frac{1}{\tau_{QR}}+\frac{1}{\tau'}\right)+t''\left(\frac{1}{\tau_A}+\frac{1}{\tau_B}\right)\right]\right\}$. Hence, $F''=\frac{1}{4}+\frac{3}{4}\xi''$. Enforcing $F''\geq F_{\mathrm{th}}$, taking logarithms, and rearranging yields Eq.~\eqref{eq:storage_feasibility_qr_eta}. The additional condition follows from requiring $(4F_{\mathrm{th}}-1)/(3\eta_{\mathrm{swap}}\xi_{0,A}\xi_{0,B})\le 1$.
\end{IEEEproof}

In this work, the probabilities $\mathbb{P}_{\mathrm{se}}[t]$ and $\mathbb{P}_{\mathrm{qubit}}[n,t]$ are obtained numerically through Monte Carlo simulation.

\subsection{Monte Carlo Simulation Method}
\label{subsec:mc_simulator}

The Monte Carlo simulator models a single \gls{QR}-assisted teleportation execution stage between Alice and Bob. Time is discretized into slots of duration $T_{\text{slot}}$. In each run, entanglement-generation attempts are performed in parallel on the \gls{QR}--Alice and \gls{QR}--Bob links, with up to $N_{\text{par}}$ simultaneous attempts per slot. Each attempt succeeds or fails according to the stochastic entanglement-generation model in Section~\ref{subsec:entanglement_success_probability}. For each qubit, the simulator records the success times $(t_A,t_B)$ on the two branches and defines $t_{\text{last}}$ as the slot at which all $N_{\text{qubit}}$ Bell pairs have been successfully generated and swapped. Given $(t_A,t_B,t_{\text{last}})$, the qubit fidelities at completion time are evaluated according to the phase-dependent fidelity initialization described in Section~\ref{subsec:fidelity_evolution} and the feasibility condition of Lemma~\ref{lem:storage_feasibility_qr}. A run is declared successful if all $N_{\text{qubit}}$ qubits satisfy the target fidelity constraint $F_{\mathrm{th}}$ at completion time. Repeating this procedure over many independent runs yields empirical estimates of $\mathbb{P}_{\mathrm{qubit}}[n,t]$, $\mathbb{P}_{\mathrm{se}}[t]$, and the resulting \textit{reliability} $R$.

To facilitate reproducibility, the Python implementation of the Monte Carlo simulator is available at \url{https://github.com/oadamuz/How_many_qubits_can_be_teleported}.

\section{Experimental Setup}
\label{sec:experimental_setup}

\subsection{Parameter Configuration}
\label{subsec:parameter_configuration}

All experiments consider a single teleportation-based \gls{QApp} executed between Alice and Bob with one intermediate \gls{QR}, placed symmetrically between the endpoints, i.e., $d'=d/2$.

The constant parameters are listed in Table~\ref{tab:fixed_parameters}. Unless otherwise stated, the experiments assume ideal elementary-pair initialization and ideal entanglement swapping, i.e., $F_{0,A}=F_{0,B}=1$ and $\eta_{\mathrm{swap}}=1$, so the reported results isolate the impact of entanglement-generation delays and quantum-memory decoherence. We further assume $\tau_A = \tau_B = \tau_{\mathrm{QR}} \equiv \tau$, and representative coherence times are summarized in Table~\ref{tab:coherence_times}. The study dimensions are defined by varying $N_{\text{par}} \in [1,10]$, $F_{\mathrm{th}} \in [0.75,0.95]$, and $R_{\mathrm{th}} \in [0.8,0.95]$. For each configuration, $N_{\text{qubit}}$ is progressively increased, and the maximum feasible value is the largest $N_{\text{qubit}}$ satisfying $R \geq R_{\mathrm{th}}$.

\begin{table}[b!]
\centering
\caption{Parameters used in the experimental evaluation.}
\label{tab:fixed_parameters}
\small
\begin{tabular}{l c l c}
\hline
\textbf{Parameter} & \textbf{Value} & \textbf{Parameter} & \textbf{Value} \\
\hline
$\theta$ & $\pi/4$~\cite{Chehimi2025} &
$P_e$ & $0.6$~\cite{Chehimi2025} \\

$P_{\mathrm{det}}$ & $0.85$~\cite{Chehimi2025} &
$P_{ce}^{(\text{fiber})}$ & $0.9$~\cite{Chehimi2025} \\

$P_{ce}^{(\text{FSO})}$ & $0.8$~\cite{El-Wakeel:16} &
$T_{\text{att}}$ & $5.5~\mu\mathrm{s}$~\cite{Chehimi2025} \\

$T_{\text{BSM}}$ & $50~\mathrm{ns}$~\cite{Chehimi2025} &
$T_{\text{PC}}$ & $10~\mathrm{ns}$~\cite{Chehimi2025} \\

$L_{\text{fiber}}$ & $21.7~\mathrm{km}$~\cite{Morana2020UltraLowLossFiber} &
$L_{\text{FSO}}$ & $8.7~\mathrm{km}$~\cite{kolka2009simulation} \\

$A_{\text{FSO}}$ & $0.20~\mathrm{m}$~\cite{Singh2023} &
$\lambda$ & $1550~\mathrm{nm}$~\cite{Kaushal2017} \\

$w_0$ & $1.5~\mathrm{cm}$~\cite{Lushnikov2018Diffraction} &
$F_{0,A}$ & $1$ \\

$F_{0,B}$ & $1$ &
$\xi_{0,A}$ & $1$ \\

$\xi_{0,B}$ & $1$ &
$\eta_{\mathrm{swap}}$ & $1$ \\
\hline
\end{tabular}
\end{table}

\begin{table}[b!]
\footnotesize
\centering
\caption{Coherence times $\tau$ for NV-center and trapped-ion quantum memory platforms.}
\label{tab:coherence_times}
\begin{tabularx}{\columnwidth}{l X r}
\hline
\textbf{Platform} & \textbf{Configuration / Conditions} & \textbf{Coh. Times ($\tau$)} \\
\hline
\textbf{NV-Center} & High-purity, low $^{13}$C, cryogenic & $1$--$3$ ms \cite{Balasubramanian2009} \\
(Diamond) & High-purity, room temperature & $300$--$600$ $\mu$s \cite{Stanwix2010} \\
& Near-surface, room temperature & $10$--$100$ $\mu$s \cite{Sangtawesin2019} \\
& High nitrogen concentration & $<100$ $\mu$s \cite{Luo_2022} \\
\hline
\textbf{Trapped-Ion} & \textit{Optical}: passive memory$^a$ & $100$--$500$ ms \cite{Hempel2026TIQC} \\
\cline{2-3}
\textbf{Ground-state} & Passive, no active suppression & $\sim 10$--$60$ s \cite{PhysRevLett.113.220501} \\
(Hyperfine / & With DD and SC$^b$ & $>10$ min \cite{Wang2017TenMinute} \\
Clock) & Active noise stabilization & $\sim 1.5$ h \cite{Wang2021OneHour} \\
\hline
\multicolumn{3}{l}{\scriptsize $^a$Limited by laser stability. $^b$DD: Dynamical Decoupling; SC: Sympathetic Cooling.}
\end{tabularx}
\end{table}

\subsection{Experiments}
\label{subsec:experiments}

The evaluation is organized into three experiments. Experiments~1 and~2 provide a parametric characterization of \gls{QApp}-level reliability by analyzing the impact of parallel entanglement generation, end-to-end distance, and memory coherence time. To isolate these effects, both experiments consider fiber links and \gls{NV}-center-based memories. Experiment~3 addresses the central question of this work, namely how many qubits can be teleported reliably under fidelity constraints in different deployment scenarios, by extending the analysis to heterogeneous settings including \gls{FSO} links and trapped-ion-based memories, while remaining within the single-\gls{QR} scope of Section~\ref{sec:system_model}.

\textbf{Experiment 1: Reliability of Multi-Qubit Entanglement with Parallel Attempts.}
We generate \textit{reliability} curves for teleporting $N_{\text{qubit}} \in \{2,4,8\}$ qubits under fidelity constraints. The fidelity threshold is swept over $F_{\text{th}} \in [0.75,1]$, the number of parallel entanglement-generation attempts is set to $N_{\text{par}} \in \{1,4\}$, and the end-to-end distance takes values $d \in \{0.25,0.5,0.75,1.00,1.25\}$~km. The memory coherence time is fixed to $\tau = 500~\mu\mathrm{s}$.

\begin{figure*}[t!]
    \centering
    \includegraphics[width=\textwidth]{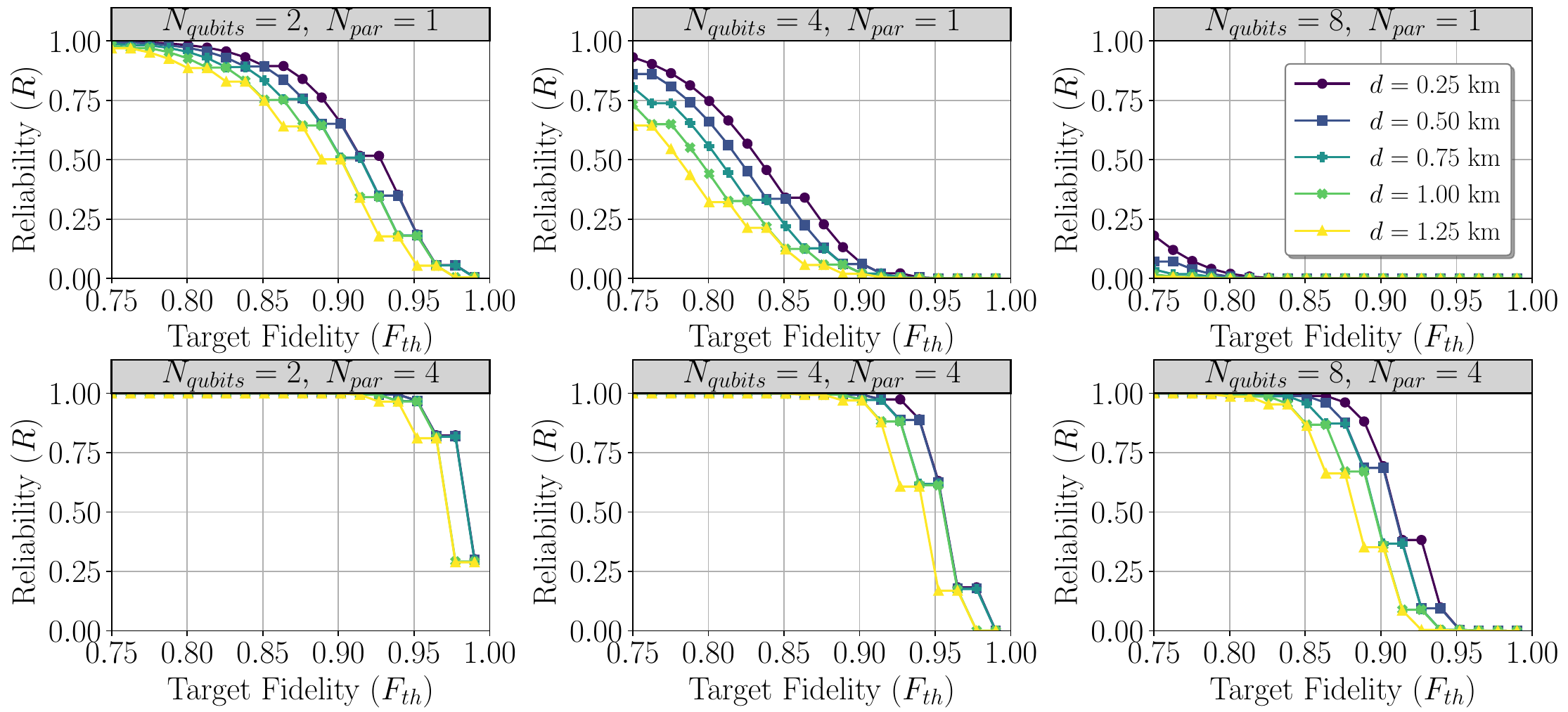}
    \caption{Evaluation of \textit{reliability} $R$ as a function of the target fidelity $F_{\text{th}}$ for $\tau = 500~\mu\mathrm{s}$.}
    \label{fig:reliability_vs_fidelity}
\end{figure*}

\textbf{Experiment 2: Impact of Memory Coherence Time on Reliability.}
We generate \textit{reliability} curves as a function of the memory coherence time $\tau$. The Alice--Bob distance is fixed to $d = 1.5~\mathrm{km}$, and the coherence time is swept over $\tau \in [10~\mu\mathrm{s},\,1~\mathrm{ms}]$. For each value of $\tau$, \textit{reliability} is evaluated for $F_{\text{th}} \in \{0.76,0.8,0.84,0.88,0.92,0.96\}$, $N_{\text{qubit}} \in \{2,4,8\}$, and $N_{\text{par}} \in \{1,4\}$.

\textbf{Experiment 3: Maximum Number of Teleportable Qubits.}
We determine the maximum number of teleportable qubits $N_{\text{qubit}}^{\max}$ that can be supported under a target \textit{reliability} $R_{\mathrm{th}}$ as a function of the Alice--Bob distance $d$. The analysis considers fiber and \gls{FSO} links, together with coherence-time regimes representative of \gls{NV}-center-based and trapped-ion-based platforms. For each combination of link type and memory technology, feasibility regions in the $(d, N_{\text{qubit}})$ plane are identified. The analysis considers $N_{\text{par}} \in \{1,2,4\}$ and $F_{\mathrm{th}} \in \{0.75,0.8,0.85\}$.

All experiments were executed on a machine equipped with 16\,GB RAM and a quad-core Intel Core i7-7700HQ processor running at 2.80\,GHz.

\section{Performance Results}
\label{sec:results} 

\subsection{Experiment 1: \textit{Reliability} of Multi-Qubit Entanglement with Parallel Attempts.}

Fig.~\ref{fig:reliability_vs_fidelity} illustrates the \textit{reliability} $R$ as a function of the target fidelity $F_{\mathrm{th}}$ for different combinations of $N_{\text{qubit}}$ and $N_{\text{par}}$.

First, increasing the number of qubits required by the \gls{QApp} significantly reduces the \textit{reliability}. This is due to the joint nature of the success event over all $N_{\text{qubit}}$ Bell pairs and to the longer entanglement-completion times, which amplify storage-induced decoherence. This effect is evident when comparing configurations with different values of $N_{\text{qubit}}$. For example, for $N_{\text{qubit}}=2$ and $N_{\mathrm{par}}=1$, the \textit{reliability} at $F_{\mathrm{th}}=0.75$ is $R\approx 1$ for $d=250~\mathrm{m}$ and $R=0.98$ for $d=500~\mathrm{m}$. In contrast, for $N_{\text{qubit}}=8$ and $N_{\mathrm{par}}=1$, the \textit{reliability} drops to $R=0.2$ at $d=250~\mathrm{m}$ and to $R=0.1$ at $d=500~\mathrm{m}$.

Second, enabling parallel entanglement-generation attempts has a substantial positive impact on system performance. This is evident when comparing the top row ($N_{\text{par}} = 1$) with the bottom row ($N_{\text{par}} = 4$). Parallelism shortens the entanglement phase and limits the accumulated decoherence. In the previous example with $N_{\text{qubit}} = 8$, increasing $N_{\text{par}}$ from $1$ to $4$ allows the system to achieve unit \textit{reliability} for a target fidelity of $F_{\mathrm{th}}=0.85$ when the distance is below $d=500~\mathrm{m}$.

\subsection{Experiment 2: Impact of Memory Coherence Time on Reliability.}

\begin{figure*}[t]
    \centering
    \includegraphics[width=\textwidth]{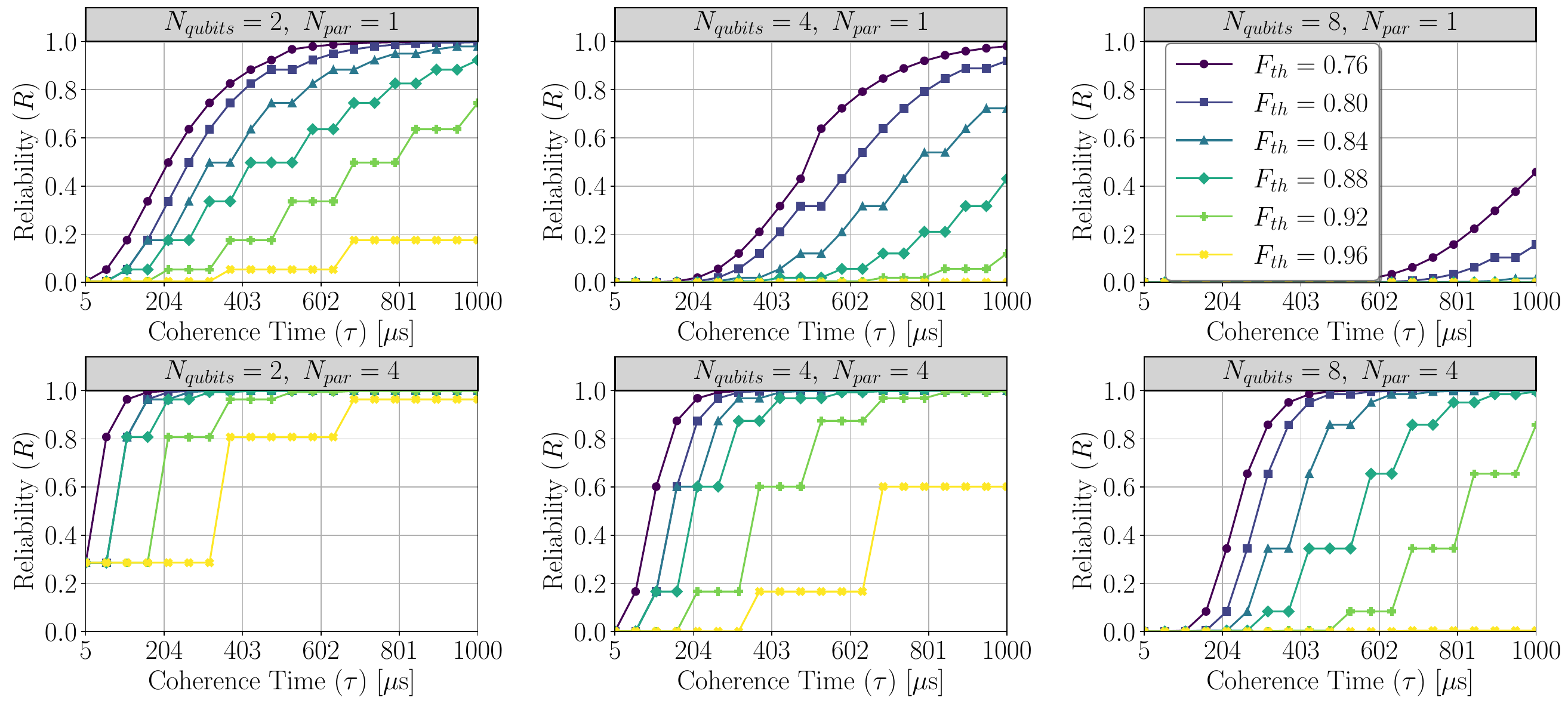}
    \caption{Evaluation of \textit{reliability} $R$ with respect to the coherence time $\tau$. Note that $d=1.5$ Km.}
    \label{fig:reliability_vs_coherence}
\end{figure*}
\begin{figure*}[t!]
    \centering
    \includegraphics[width=\textwidth]{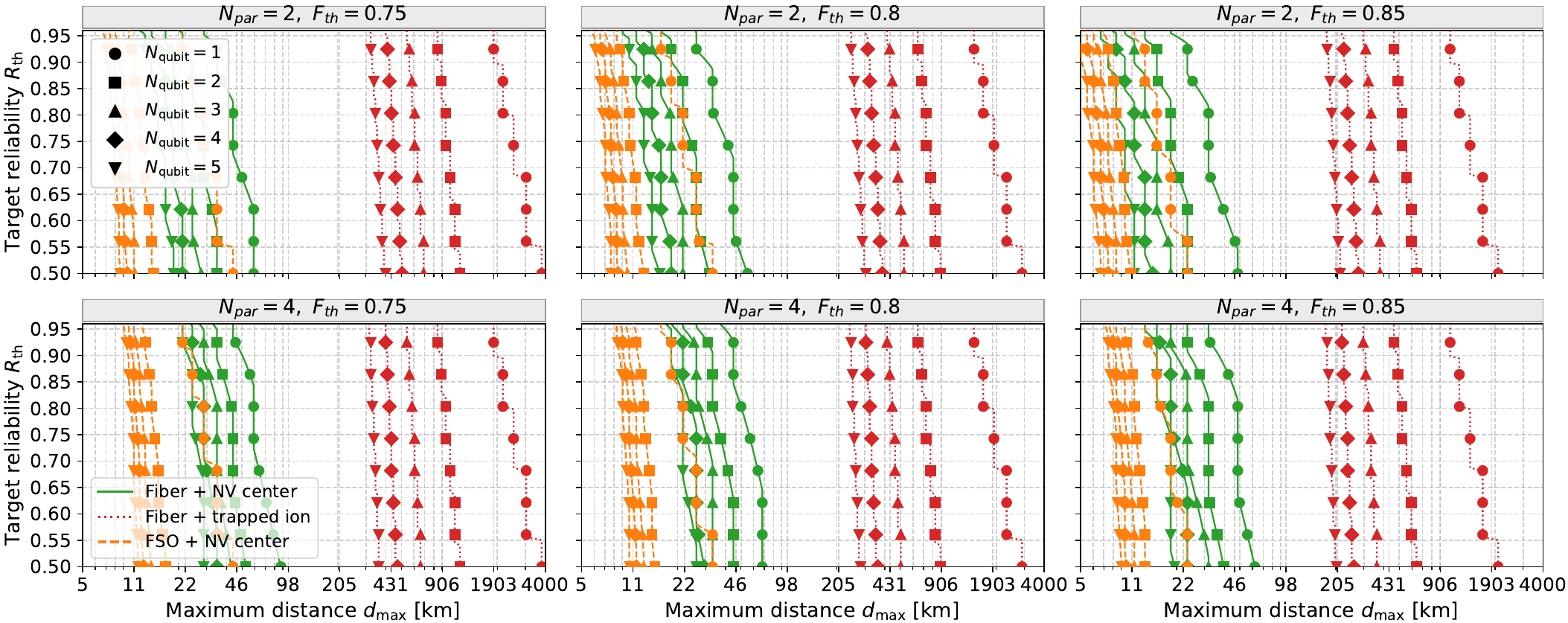}
    \caption{Maximum achievable distance $d_{\max}$ as a function of the target reliability $R_{\mathrm{th}}$ for different values of $N_{\text{qubit}}$, considering NV-center-based memories with coherence time $\tau = 3~\mathrm{ms}$ and trapped-ion-based memories with coherence time $\tau = 250~\mathrm{ms}$.}
    \label{fig:dmax_vs_rth}
\end{figure*}

Fig.~\ref{fig:reliability_vs_coherence} shows the resulting \textit{reliability} $R$ as a function of the memory coherence time $\tau$, for different combinations of $N_{\text{qubit}} \in \{2,4,8\}$, $N_{\text{par}} \in \{1,4\}$, and target fidelity values $F_{\text{th}}$.

The results show a clear trade-off between the number of qubits required by the \gls{QApp} and the achievable parallelism in the Bell-pair generation process. Increasing $N_{\text{qubit}}$ systematically degrades the \textit{reliability} $R$, while higher parallelism $N_{\text{par}}$ mitigates this effect by reducing the effective accumulation time. This follows from the joint success requirement over all $N_{\text{qubit}}$ Bell pairs and from the reduction of the completion time $t_{\text{last}}$ when parallel entanglement attempts are enabled. For instance, for $N_{\text{qubit}} = 8$ and $F_{\text{th}} = 0.76$, the reliability is $R\approx0.45$ for $N_{\text{par}} = 1$ when $\tau \approx 1~\mathrm{ms}$, whereas increasing the parallelism to $N_{\text{par}} = 4$ raises the reliability to $R \approx 1$. In contrast, for short coherence times ($\tau < 100~\mu\mathrm{s}$), \glspl{QApp} with $N_{\text{qubit}} = 8$ exhibit very low \textit{reliability} even when parallel entanglement attempts are enabled.

\subsection{Experiment 3: Maximum Number of Teleportable Qubits}

Fig.~\ref{fig:dmax_vs_rth} shows the maximum admissible Alice--Bob distance $d_{\max}$ as a function of the target reliability $R_{\mathrm{th}}$ for different values of $N_{\text{qubit}}$, $N_{\text{par}}$, $F_{\mathrm{th}}$, transmission media, and memory technology. Each subplot corresponds to a specific pair $(N_{\text{par}},F_{\mathrm{th}})$: the top row considers $N_{\text{par}}=2$, the bottom row $N_{\text{par}}=4$, and the three columns correspond to $F_{\mathrm{th}}\in\{0.75,0.8,0.85\}$. Within each subplot, each curve represents a fixed value of $N_{\text{qubit}}$, and its points define the feasibility boundary in the $(d_{\max},R_{\mathrm{th}})$ plane for a given transmission-medium and memory-technology combination.

For a fixed memory technology, transmission medium, and target reliability $R_{\mathrm{th}}$, increasing the number of teleported qubits $N_{\text{qubit}}$ shifts the feasible operating region toward shorter distances. This is observed as a leftward displacement of the feasibility boundary as $N_{\text{qubit}}$ increases, indicating a progressive reduction of $d_{\max}$. This trend is consistent across all considered values of $N_{\text{par}}$ and $F_{\mathrm{th}}$.

The figure also shows the impact of the fidelity target and the degree of parallelism. For a fixed deployment setting and $N_{\text{qubit}}$, moving from left to right, i.e., increasing $F_{\mathrm{th}}$ from $0.75$ to $0.85$, reduces the maximum feasible distance. In contrast, increasing the number of parallel entanglement-generation attempts from $N_{\text{par}}=2$ to $N_{\text{par}}=4$ shifts the feasibility boundaries toward larger values of $d_{\max}$. Hence, parallelism partially compensates for the stricter storage-time requirements imposed by higher fidelity targets, although this gain becomes more limited as $N_{\text{qubit}}$ increases.

Comparing transmission media, optical fiber supports larger values of $d_{\max}$ than terrestrial \gls{FSO} for the same $(N_{\text{qubit}},R_{\mathrm{th}})$ pair, as shown by the fiber-based curves appearing systematically to the right of the corresponding \gls{FSO} curves in the \gls{NV}-center regime. This is caused by the strong geometric coupling losses inherent to ground-based \gls{FSO} links, which limit the achievable distance even under clear-sky conditions. In the \gls{NV}-center regime, \gls{FSO} links yield feasible distances on the order of a few tens of kilometers, whereas fiber-based links can sustain distances of up to about $100~\mathrm{km}$ depending on the reliability and fidelity targets.

Memory technology has a dominant impact on the attainable distance scale. While \gls{NV}-center-based memories are limited to short- and medium-range links, trapped-ion-based memories substantially expand the feasible regions. For fiber links, trapped-ion-based memories allow $d_{\max}$ values of several hundreds of kilometers and, for $N_{\text{qubit}}=1$, even beyond $1000~\mathrm{km}$. For terrestrial \gls{FSO} links, feasible distances are inherently constrained by line-of-sight conditions and geometric coupling losses; in realistic ground-based scenarios, this typically limits operation to distances on the order of a few tens of kilometers even under favorable conditions. For this reason, \gls{FSO} configurations with trapped-ion-based memories are not evaluated.

\section{Conclusion}
\label{sec:conclusions} 
This paper studied the scalability of fidelity-constrained multi-qubit quantum teleportation between Alice and Bob assisted by a single \gls{QR}. We introduced a \gls{QApp}-level \textit{reliability} metric capturing the joint effect of stochastic end-to-end Bell-pair generation, parallel entanglement attempts, and quantum-memory decoherence, and evaluated it through Monte Carlo simulation. Results show that memory coherence time is the main scalability bottleneck, while parallelism is essential to sustain \textit{reliability} as the number of teleported qubits increases. Feasibility strongly depends on the underlying technology: \gls{NV}-center-based memories restrict operation to short and medium distances, whereas trapped-ion-based memories enable reliable multi-qubit teleportation over hundreds of kilometers in fiber links. In contrast, terrestrial \gls{FSO} links remain constrained by geometric coupling losses.

Future work will pursue two directions: deriving an analytical approximation of the proposed reliability metric to avoid computationally intensive Monte Carlo simulations, and extending the model to more general quantum-network settings, including multi-repeater topologies and entanglement purification/distillation mechanisms.


\section*{Acknowledgment}
This work is part of the project PID2022-137329OB-C43 funded by MICIU/AEI/10.13039/501100011033 and by FEDER, EU, and also part of the project C-ING-306-UGR23 funded by Consejería de Universidad, Investigación e Innovación and by ERDF Andalusia Program 2021-2027.



%



\bibliographystyle{ieeetr}
\bibliography{references}

@ARTICLE{Chehimi2025,
  author={Chehimi, Mahdi and Goodenough, Kenneth and Saad, Walid and Towsley, Don and Zhou, Tony X.},
journal={IEEE J. Sel. Areas Commun.
}, 
  title={{Entanglement Distribution Delay Optimization in Quantum Networks With Distillation}}, 
  year={2025},
  volume={43},
  number={5},
  pages={1871-1886},
  doi={10.1109/JSAC.2025.3543485}}

@article{Balasubramanian2009,
author       = {G. Balasubramanian and others},
  title        = {{Ultralong Spin Coherence Time in Isotopically Engineered Diamond}},
  journal      = {Nature Mater.},
  volume       = {8},
  pages        = {383--387},
  year         = {2009},
  doi          = {10.1038/nmat2420}
}

@article{Sangtawesin2019,
  title = {{Origins of Diamond Surface Noise Probed by Correlating Single-Spin Measurements with Surface Spectroscopy}},
author = {Sangtawesin, Sorawis and others},
  journal = {Phys. Rev. X},
  volume = {9},
  issue = {3},
  pages = {031052},
  numpages = {17},
  year = {2019},
  month = {Sep},
  publisher = {American Physical Society},
  doi = {10.1103/PhysRevX.9.031052},
  url = {https://link.aps.org/doi/10.1103/PhysRevX.9.031052}
}

@article{Luo_2022,
doi = {10.1088/1367-2630/ac58b6},
url = {https://dx.doi.org/10.1088/1367-2630/ac58b6},
year = {2022},
month = {mar},
publisher = {IOP Publishing},
volume = {24},
number = {3},
pages = {033030},
author = {Luo, T and others},
title = {{Creation of nitrogen-vacancy centers in chemical vapor deposition diamond for sensing applications}},
journal = {New Journal of Physics}
}

@article{Stanwix2010,
  title = {{Coherence of nitrogen-vacancy electronic spin ensembles in diamond}},
author = {Stanwix, P. L. and others},
  journal = {Phys. Rev. B},
  volume = {82},
  issue = {20},
  pages = {201201},
  numpages = {4},
  year = {2010},
  month = {Nov},
  publisher = {American Physical Society},
  doi = {10.1103/PhysRevB.82.201201},
  url = {https://link.aps.org/doi/10.1103/PhysRevB.82.201201}
}

@article{Wehner_QuantumInternet18,
author = {Stephanie Wehner  and David Elkouss  and Ronald Hanson },
title = {{Quantum internet: A vision for the road ahead}},
journal = {Science},
volume = {362},
number = {6412},
pages = {eaam9288},
year = {2018},
doi = {10.1126/science.aam9288},
URL = {https://www.science.org/doi/abs/10.1126/science.aam9288},
eprint = {https://www.science.org/doi/pdf/10.1126/science.aam9288},
}

@article{pirandola2017fundamental,
  title={{Fundamental limits of repeaterless quantum communications}},
  author={Pirandola, Stefano and Laurenza, Riccardo and Ottaviani, Carlo and Banchi, Leonardo},
  journal={Nat. Commun.},
  volume={8},
  number={1},
  pages={15043},
  year={2017},
  publisher={Nature Publishing Group UK London}
}

@ARTICLE{davies24tools,
  author={Davies, Bethany and Beauchamp, Thomas and Vardoyan, Gayane and Wehner, Stephanie},
journal={IEEE Trans. on Quantum Eng.}, 
  title={{Tools for the Analysis of Quantum Protocols Requiring State Generation Within a Time Window}}, 
  year={2024},
  volume={5},
  number={},
  pages={1-20},
  doi={10.1109/TQE.2024.3358674}}

@article{Vardoyan23ACMTMPE,
author = {Vardoyan, Gayane and Nain, Philippe and Guha, Saikat and Towsley, Don},
title = {{On the Capacity Region of Bipartite and Tripartite Entanglement Switching}},
year = {2023},
issue_date = {June 2023},
publisher = {Association for Computing Machinery},
address = {New York, NY, USA},
volume = {8},
number = {1–2},
issn = {2376-3639},
url = {https://doi.org/10.1145/3571809},
doi = {10.1145/3571809},
journal = {ACM Trans. Model. Perform. Eval. Comput. Syst.},
month = mar,
articleno = {1},
numpages = {18}
}

@article{Vardoyan20ScienceDirectPerfEval,
title = {{On the Exact Analysis of an Idealized Quantum Switch}},
journal = {Perform. Eval.},
volume = {144},
pages = {102141},
year = {2020},
issn = {0166-5316},
doi = {https://doi.org/10.1016/j.peva.2020.102141},
url = {https://www.sciencedirect.com/science/article/pii/S0166531620300614},
author = {Gayane Vardoyan and Saikat Guha and Philippe Nain and Don Towsley}
}

@ARTICLE{Vardoyan21IEEETransQEng,
  author={Vardoyan, Gayane and Guha, Saikat and Nain, Philippe and Towsley, Don},
journal={IEEE Trans. on Quantum Eng.}, 
  title={{On the Stochastic Analysis of a Quantum Entanglement Distribution Switch}}, 
  year={2021},
  volume={2},
  number={},
  pages={1-16},
  doi={10.1109/TQE.2021.3058058}}

@ARTICLE{Dai20IEEEJSAC,
  author={Dai, Wenhan and Peng, Tianyi and Win, Moe Z.},
journal={IEEE J. Sel. Areas Commun.}, 
  title={{Quantum Queuing Delay}}, 
  year={2020},
  volume={38},
  number={3},
  pages={605-618},
  doi={10.1109/JSAC.2020.2969000}}

@INPROCEEDINGS{Panigrahy23INFOCOM,
  author={Panigrahy, Nitish K. and Vasantam, Thirupathaiah and Towsley, Don and Tassiulas, Leandros},
booktitle={IEEE INFOCOM}, 
  title={{On the Capacity Region of a Quantum Switch with Entanglement Purification}}, 
  year={2023},
  volume={},
  number={},
  pages={1-10},
  doi={10.1109/INFOCOM53939.2023.10229003}}

@INPROCEEDINGS{Tillman24IEEEQCE,
  author={Tillman, Ian and Vasantam, Thirupathaiah and Towsley, Don and Seshadreesan, Kaushik P.},
booktitle={IEEE QCE}, 
  title={{Calculating the Capacity Region of a Quantum Switch}}, 
  year={2024},
  volume={01},
  number={},
  pages={1868-1878},
  doi={10.1109/QCE60285.2024.00216}}

@article{Nain20ACMMACS,
author = {Nain, Philippe and Vardoyan, Gayane and Guha, Saikat and Towsley, Don},
title = {{On the Analysis of a Multipartite Entanglement Distribution Switch}},
year = {2020},
issue_date = {June 2020},
publisher = {Association for Computing Machinery},
address = {New York, NY, USA},
volume = {4},
number = {2},
url = {https://doi.org/10.1145/3392141},
doi = {10.1145/3392141},
journal = {Proc. ACM Meas. Anal. Comput. Syst.},
month = jun,
articleno = {23},
numpages = {39}
}

@article{TelecomWavelengthQR,
  title = {{Telecom-Wavelength Quantum Repeater Node Based on a Trapped-Ion Processor}},
  author = {Krutyanskiy, V. and others},
  journal = {Phys. Rev. Lett.},
  volume = {130},
  issue = {21},
  pages = {213601},
  numpages = {7},
  year = {2023},
  month = {May},
  publisher = {American Physical Society},
  doi = {10.1103/PhysRevLett.130.213601},
  url = {https://link.aps.org/doi/10.1103/PhysRevLett.130.213601}
}

@article{Main2025DQC,
  author  = {Main, D. and others},
  title   = {{Distributed quantum computing across an optical network link}},
  journal = {Nature},
  volume  = {638},
  pages   = {383--388},
  year    = {2025},
  doi     = {10.1038/s41586-024-08404-x}
}

@article{Barral2025DQC,
  author  = {Barral, David and others},
  title   = {{Review of Distributed Quantum Computing: From single QPU to High Performance Quantum Computing}},
  journal = {Comput. Sci. Rev.},
  volume  = {57},
  pages   = {100747},
  year    = {2025},
  doi     = {10.1016/j.cosrev.2025.100747}
}

@inproceedings{IloOkeke2020QCS,
  author    = {Ilo-Okeke, Ebubechukwu O. and Tessler, Louis and Dowling, Jonathan P. and Byrnes, Tim},
  title     = {{Entanglement-based quantum clock synchronization}},
  booktitle = {AIP Conf. Proc.},
  volume    = {2241},
  number    = {1},
  pages     = {020011},
  year      = {2020},
  doi       = {10.1063/5.0011396},
  url       = {https://pubs.aip.org/aip/acp/article/2241/1/020011/1002056/Entanglement-based-quantum-clock-synchronization}
}

@article{Barhoumi2024QANT,
  author  = {Barhoumi, Mohamed},
  title   = {{Quantum-assisted network time synchronisation: A literature review, considering examples and challenges}},
  journal = {SSRN Electron. J.},
  year    = {2024},
  url     = {https://papers.ssrn.com/sol3/papers.cfm?abstract_id=5170022}
}

@article{Pappa2012PRL,
  author  = {Pappa, Anna and Chailloux, Andr{\'e} and Wehner, Stephanie and Diamanti, Eleni and Kerenidis, Iordanis},
  title   = {{Multipartite Entanglement Verification Resistant against Dishonest Parties}},
  journal = {Phys. Rev. Lett.},
  volume  = {108},
  number  = {26},
  pages   = {260502},
  year    = {2012},
  doi     = {10.1103/PhysRevLett.108.260502}
}

@article{McCutcheon2016NatComm,
  author  = {McCutcheon, W. and others},
  title   = {{Experimental Verification of Multipartite Entanglement in Quantum Networks}},
  journal = {Nat. Commun.},
  volume  = {7},
  pages   = {13251},
  year    = {2016},
  doi     = {10.1038/ncomms13251}
}

@incollection{Hempel2026TIQC,
  author    = {Cornelius Hempel},
  title     = {Trapped-ion quantum computers},
  booktitle = {Quantum Technologies. Trends and Implications for Cyber Defense},
  editor    = {J. Jang-Jaccard and P. Caroff and E. Blezinger and V. Mulder and A. Mulder and V. Lenders},
  publisher = {Springer Nature},
  address   = {Cham},
  year      = {2026},
  pages     = {15--24},
  doi       = {10.1007/978-3-031-90727-2}
}

@article{PhysRevLett.113.220501,
  title = {High-Fidelity Preparation, Gates, Memory, and Readout of a Trapped-Ion Quantum Bit},
  author = {Harty, T. P. and others},
  journal = {Phys. Rev. Lett.},
  volume = {113},
  issue = {22},
  pages = {220501},
  numpages = {5},
  year = {2014},
  month = {Nov},
  publisher = {American Physical Society},
  doi = {10.1103/PhysRevLett.113.220501},
  url = {https://link.aps.org/doi/10.1103/PhysRevLett.113.220501}
}

@article{Wang2017TenMinute,
  author  = {Wang, Ye and others},
  title   = {{Single-qubit quantum memory exceeding ten-minute coherence time}},
  journal = {Nat. Photon.},
  volume  = {11},
  number  = {10},
  pages   = {646--650},
  year    = {2017},
  doi     = {10.1038/s41566-017-0007-1}
}

@article{Wang2021OneHour,
  author  = {Wang, Pengfei and others},
  title   = {{Single ion qubit with estimated coherence time exceeding one hour}},
  journal = {Nat. Commun.},
  volume  = {12},
  pages   = {233},
  year    = {2021},
  doi     = {10.1038/s41467-020-20330-w}
}

@article{Bradley2019,
  title = {{A Ten-Qubit Solid-State Spin Register with Quantum Memory up to One Minute}},
  author = {Bradley, C. E. and others},
  journal = {Phys. Rev. X},
  volume = {9},
  issue = {3},
  pages = {031045},
  numpages = {12},
  year = {2019},
  month = {Sep},
  publisher = {American Physical Society},
  doi = {10.1103/PhysRevX.9.031045},
  url = {https://link.aps.org/doi/10.1103/PhysRevX.9.031045}
}

@article{Bruzewicz2019,
    author = {Bruzewicz, Colin D. and Chiaverini, John and McConnell, Robert and Sage, Jeremy M.},
    title = {{Trapped-ion quantum computing: Progress and challenges}},
    journal = {Applied Physics Reviews},
    volume = {6},
    number = {2},
    pages = {021314},
    year = {2019},
    month = {05},
    issn = {1931-9401},
    doi = {10.1063/1.5088164},
    url = {https://doi.org/10.1063/1.5088164},
    eprint = {https://pubs.aip.org/aip/apr/article-pdf/doi/10.1063/1.5088164/19742554/021314_1_online.pdf},
}

@article{Koudia2025SpatialModeDiversity,
  title   = {{Spatial-mode diversity and multiplexing for continuous variables quantum communications}},
  author  = {Koudia, Seid and Oleynik, Leonardo and ur Rehman, Junaid and Chatzinotas, Symeon},
  journal = {Commun. Phys.},
  volume  = {8},
  number  = {351},
  year    = {2025},
  doi     = {10.1038/s42005-025-02258-z},
  publisher = {Nature Portfolio}
}

@article{ZhaoFSOCapacity2015,
  author  = {Zhao, Ningning and Li, Xiang and Li, Guifang and Kahn, Joseph M.},
  title   = {{Capacity limits of spatially multiplexed free-space communication channels}},
  journal = {Nat. Photon.},
  volume  = {9},
  number  = {12},
  pages   = {822--826},
  year    = {2015},
  doi     = {10.1038/nphoton.2015.214}
}

@article{PirandolaFSO2021,
  author  = {Pirandola, Stefano},
  title   = {{Limits and security of free-space quantum communications}},
  journal = {Phys. Rev. Res.},
  volume  = {3},
  number  = {1},
  pages   = {013279},
  year    = {2021},
  doi     = {10.1103/PhysRevResearch.3.013279}
}

@article{Bel2025SimulatorsReview,
  author    = {Oceane Bel and Mariam Kiran},
  title     = {{Simulators for quantum network modeling: A comprehensive review}},
  journal   = {Comput. Netw.},
  volume    = {263},
  pages     = {111204},
  year      = {2025},
  doi       = {10.1016/j.comnet.2025.111204},
  publisher = {Elsevier}
}

@article{El-Wakeel:16,
author = {Amr S. El-Wakeel and Nazmi A. Mohammed and Moustafa H. Aly},
journal = {Appl. Opt.},
number = {26},
pages = {7276--7286},
publisher = {Optica Publishing Group},
title = {{Free space optical communications system performance under atmospheric scattering and turbulence for 850 and 1550\&\#x2009;\&\#x2009;nm operation}},
volume = {55},
month = {Sep},
year = {2016},
url = {https://opg.optica.org/ao/abstract.cfm?URI=ao-55-26-7276},
doi = {10.1364/AO.55.007276},
}

@article{Singh2023,
    author = {Singh, Harjeevan and Mittal, Nitin and Ogudo, Kingsley A.},
    title = {{Optimizing the receiver aperture parameters of free space optical (FSO) link for performance enhancement}},
    journal = {AIP Conf. Proc.},
    volume = {2495},
    number = {1},
    pages = {020030},
    year = {2023},
    month = {10},
    issn = {0094-243X},
    doi = {10.1063/5.0123191},
    url = {https://doi.org/10.1063/5.0123191},
    eprint = {https://pubs.aip.org/aip/acp/article-pdf/doi/10.1063/5.0123191/18178540/020030_1_5.0123191.pdf},
}

@inproceedings{kolka2009simulation,
  title={{Simulation of atmospheric optical channel with ISI}},
  author={Kolka, ZDENEK and Biolek, Dalibor and Biolkova, VIERA},
  booktitle={CSECS},
  volume={9},
  pages={198--201},
  year={2009}
}

@article{Lushnikov2018Diffraction,
  author    = {Lushnikov, P. M. and Vladimirova, N.},
  title     = {{Toward Defeating Diffraction and Randomness for Laser Beam Propagation in Turbulent Atmosphere}},
  journal   = {JETP Lett.},
  volume    = {108},
  number    = {9},
  pages     = {571--576},
  year      = {2018},
  doi       = {10.1134/S0021364018210026},
}

@ARTICLE{Kaushal2017,
  author={Kaushal, Hemani and Kaddoum, Georges},
  journal={IEEE Commun. Surv. Tutor.}, 
  title={{Optical Communication in Space: Challenges and Mitigation Techniques}}, 
  year={2017},
  volume={19},
  number={1},
  pages={57-96},
  doi={10.1109/COMST.2016.2603518}}

@article{Morana2020UltraLowLossFiber,
  author    = {Morana, A. and others},
  title     = {{Extreme Radiation Sensitivity of Ultra-Low Loss Pure-Silica-Core Optical Fibers at Low Dose Levels and Infrared Wavelengths}},
  journal   = {Sensors},
  volume    = {20},
  number    = {24},
  pages     = {7254},
  year      = {2020},
  doi       = {10.3390/s20247254}
}

@article{Bouwmeester1997,
  author    = {D. Bouwmeester and J. W. Pan and K. Mattle and M. Eibl and H. Weinfurter and A. Zeilinger},
  title     = {{Experimental quantum teleportation}},
  journal   = {Nature},
  year      = {1997},
  volume    = {390},
  pages     = {575--579},
  doi       = {10.1038/37539},
  url       = {https://doi.org/10.1038/37539}
}

@article{bruss,
    author = {Bruß, Dagmar},
    title = {Characterizing entanglement},
    journal = {J. Math. Phys.},
    volume = {43},
    number = {9},
    pages = {4237-4251},
    year = {2002},
    month = {09},
    issn = {0022-2488},
    doi = {10.1063/1.1494474},
    url = {https://doi.org/10.1063/1.1494474},
    eprint = {https://pubs.aip.org/aip/jmp/article-pdf/43/9/4237/19182946/4237_1_online.pdf},
}

\end{document}